\newtheorem{theorem}{Theorem}
\newtheorem{corollary}{Corollary}
\title{\textbf{Speed-scaling with no Preemptions}}
\author[1]{Evripidis Bampis}
\author[2]{Dimitrios Letsios}
\author[1]{Giorgio Lucarelli}
\affil[1]{LIP6, Universit\'{e} Pierre et Marie Curie, France

\texttt{\{Evripidis.Bampis, Giorgio.Lucarelli\}@lip6.fr}\medskip }
\affil[2]{Institut f\"{u}r Informatik, Technische Universit\"{a}t M\"{u}nchen, Germany
\texttt{letsios@informatik.tu-muenchen.de}}
\begin{document}

\maketitle

\begin{abstract}
We revisit the non-preemptive speed-scaling problem,
in which a set of jobs have to be executed on a single or a set of parallel speed-scalable processor(s)
between their release dates and deadlines so that the energy consumption to be minimized.
We adopt the speed-scaling mechanism first introduced in [Yao et al., FOCS 1995]
according to which the power dissipated is a convex function of the processor's speed.
Intuitively, the higher is the speed of a processor, the higher is the energy consumption.
For the single-processor case, we improve the best known approximation algorithm by providing a $(1+\epsilon)^{\alpha}\tilde{B}_{\alpha}$-approximation algorithm,
where $\tilde{B}_{\alpha}$ is a generalization of the Bell number.
For the multiprocessor case, we present an approximation algorithm of ratio
$\tilde{B}_{\alpha}((1+\epsilon)(1+\frac{w_{\max}}{w_{\min}}))^{\alpha}$
improving the  best known result  by a factor of $(\frac{5}{2})^{\alpha-1}(\frac{w_{\max}}{w_{\min}})^{\alpha}$.
Notice that our result holds for the fully heterogeneous environment
while the previous known result holds only in the more restricted case of parallel processors with identical power functions.
\end{abstract}

\section{Introduction}

Speed-scaling (or dynamic voltage scaling) is one of the main mechanisms to save energy in modern computing systems.
According to this mechanism, the speed of each processor may dynamically change over time,
while the energy consumed by the processor is proportional to a convex function of the speed,
capturing in this way the intuition that higher speeds lead to higher energy consumption.
More precisely, if the speed of a processor is equal to $s(t)$ at a time instant $t$,
then the power dissipated is $P(s(t))=s(t)^{\alpha}$, where $\alpha>1$ is a small constant.
For example, the value of $\alpha$ is theoretically between two and three for CMOS devices,
while some experimental studies showed that $\alpha$ is rather smaller:
1.11 for Intel PXA 270, 1.62 for Pentium M770 and 1.66 for a TCP offload engine \cite{WAT09}.
The energy consumption is the integral of the power over time, i.e., $E=\int P(s(t)) dt$.

In order to handle the energy consumption in a computing system with respect to the speed-scaling mechanism,
we consider the following scheduling problem.
We are given a set of jobs and a single processor or a set of parallel processors.
Each job is characterized by a release date, a deadline and an amount of workload
that has to be executed between the job's release date and deadline.
The objective is to find a feasible schedule that minimizes the energy consumption.
In order to describe such a feasible schedule,
we have to determine not only the job that has to be executed on every machine at each time instant, but also the speed of each processor.

Speed-scaling scheduling problems have been extensively studied in the literature.
Since the seminal paper by Yao et al. \cite{YDS95} on 1995 until very recently,
all the energy minimization works considered the \emph{preemptive} case
in which the execution of a job may be interrupted and restarted later on the same or even on a different processor (\emph{migratory} case).
Only the last three years, some works study the \emph{non-preemptive} case.
In this paper, we improve the best known approximation algorithms for the non-preemptive case
for both the single-processor and the multiprocessor environments.

\paragraph{Problem definition and notation.}

We consider a set $\mathcal{J}$ of $n$ jobs, each one cha\-ra\-cterized by an amount of work $w_j$, a release date $r_j$ and a deadline $d_j$.
We will consider both the single-processor and the multiprocessor cases.
In the multiprocessor environment, we denote by $\mathcal{P}$ the set of $m$ available parallel processors.
Moreover, we distinguish between the \emph{homogeneous} and the \emph{heterogeneous} multiprocessor cases.
In the latter one, we assume that each processor $i \in \mathcal{P}$ has a different constant $\alpha_i$,
capturing in this way the existence of processors with different energy consumption rate.
For simplicity, we define $\alpha=\max_{i \in \mathcal{P}}\{\alpha_i\}$.
Moreover, in the \emph{fully heterogeneous} case we additionally assume that each job $j \in \mathcal{J}$
has a different work $w_{i,j}$, release date $r_{i,j}$ and deadline $d_{i,j}$ on each processor $i \in \mathcal{P}$.
In all cases, the objective is to find a schedule that minimizes the energy consumption with respect to the speed-scaling mechanism,
such that each job $j \in \mathcal{J}$ is executed during its \emph{life} interval $[r_j,d_j]$.
The results presented in this paper assume that the preemption of jobs is not allowed;
and hence neither their migration in the multiprocessor environments.

In what follows, we denote by $w_{\max}$ and $w_{\min}$ the maximum and the minimum work, respectively, among all jobs.
Moreover, we call an instance \emph{agreeable} if earlier released jobs have earlier deadlines,
i.e., for each $j$ and $j'$ with $r_j \leq r_{j'}$ then $d_j \leq d_{j'}$.
Finally, given a schedule $\mathcal{S}$ we denote by $E(\mathcal{S})$ its energy consumption.

\paragraph{Related work.}

In \cite{YDS95}, a polynomial-time algorithm has been presented that finds an optimal preemptive schedule when a single processor is available.
In the case where the preemption and also the migration of jobs are allowed,
several polynomial-time algorithms have been proposed when a set of homogeneous parallel processors is available \cite{AAG11,ABKL12,BLL12,BG08},
while in the fully heterogeneous environment an $OPT+\epsilon$ algorithm with complexity polynomial to $\frac{1}{\epsilon}$ has been presented in \cite{BKLLS13}.
In the case where the preemption of jobs is allowed but not their migration,
the problem becomes strongly NP-hard even if all jobs have equal release dates and equal deadlines \cite{AMS07}.
For this special case, the authors in \cite{AMS07} observed that a PTAS can be derived from \cite{HS87}.
For arbitrary release dates and deadlines, a $B_{\lceil\alpha\rceil}$-approximation algorithm is known \cite{GNS09},
where $B_{\lceil\alpha\rceil}$ is the $\lceil\alpha\rceil$-th Bell number.
This result has been extended in \cite{BKLLS13} for the fully heterogeneous environment,
where an approximation algorithm of ratio $(1+\epsilon)^{\alpha}\tilde{B}_{\alpha}$ has been presented, where
$\tilde{B}_{\alpha}=\sum_{k=0}^{\infty} \frac{k^{\alpha}e^{-1}}{k!}$ is a generalization of the Bell number that is also valid  for fractional values of $\alpha$.

When preemptions are not allowed, Antoniadis and Huang \cite{AH12} proved that the single-processor case is strongly NP-hard,
while they have also presented a $2^{5\alpha-4}$-approximation algorithm.
In \cite{BKLLS13}, an approximation algorithm of ratio $2^{\alpha-1}(1+\epsilon)^{\alpha}\tilde{B}_{\alpha}$ has been proposed,
improving the ratio given in \cite{AH12} for any $\alpha<114$.
Recently, an approximation algorithm of ratio $(12(1+\epsilon))^{\alpha-1}$ is given in \cite{CLMM14},
improving the approximation ratio for any $\alpha>25$.
Moreover, the relation between preemptive and non-preemptive schedules in the energy-minimization setting has been studied in \cite{BKLLN13}. The authors show that using as a lower bound the optimal preemptive solution, it is possible to obtain
an approximation algorithm of ratio $(1+\frac{w_{\max}}{w_{\min}})^{\alpha}$.
In the special case where all jobs have equal work this leads to a constant factor approximation of $2^{\alpha}$.
Recently, for this special case, Angel et al. \cite{ABC13} and Huang and Ott \cite{HO14}, independently, proposed
an optimal polynomial-time algorithm based on dynamic programming.
Note also that for agreeable instances the single-processor non-preemptive speed-scaling problem can be solved to optimality in polynomial time,
as the algorithm proposed by Yao et al. \cite{YDS95} for the preemptive case
returns a non-preemptive schedule for agreeable instances.

For homogeneous multiprocessors when preemptions are not allowed,
an approximation algorithm with ratio $m^{\alpha}(\sqrt[m]{n})^{\alpha-1}$ has been presented in \cite{BKLLN13}
which uses as a lower bound the optimal preemptive solution.
More recently, Cohen-Addad et al. \cite{CLMM14} proposed an algorithm of ratio
$(\frac{5}{2})^{\alpha-1}\tilde{B}_{\alpha}((1+\epsilon)(1+\frac{w_{\max}}{w_{\min}}))^{\alpha}$,
transforming the problem to the fully heterogeneous preemptive non-migratory case and using the approximation algorithm proposed in \cite{BKLLS13}.
This algorithm leads to an approximation ratio of $2(1+\epsilon)^{\alpha}5^{\alpha-1}\tilde{B}_{\alpha}$ for the case where all jobs have equal work.
The authors in \cite{CLMM14} observe also that their algorithm can be used
when each job $j \in \mathcal{J}$ has a different work $w_{i,j}$ on each processor $i \in \mathcal{P}$,
by loosing an additional factor of $(\frac{w_{\max}}{w_{\min}})^{\alpha}$.

Several other results concerning scheduling problems in the speed-scaling setting have been presented,
involving the optimization of some Quality of Service (QoS) criterion under a budget of energy,
or the optimization of a linear combination of the energy consumption and some QoS criterion (see for example \cite{BLMZ12,B06,PSU08}).
Moreover, two other energy minimization variants of the speed-scaling model have been studied in the literature,
namely the \emph{bounded speed model} in which the speeds of the processors are bounded above and below (see for example \cite{CCLLMW07}),
and the \emph{discrete speed model} in which the speeds of the processors can be selected
among a set of discrete speeds (see for example \cite{LY06}).
The interested reader can find more details in the surveys \cite{A10,A11}.

\paragraph{Our contribution.}

In Section~\ref{section:single} we revisit the single-processor non-preemptive speed-scaling problem,
and we present an approximation algorithm of ratio $(1+\varepsilon)^{\alpha-1}\tilde{B}_{\alpha}$ which becomes the best algorithm for any $\alpha \leq 77$.
Recall that in practice $\alpha$ is a small constant and usually $\alpha \in (1,3]$.
In \cite{BKLLN13}, where  the relation between preemptive and non-preemptive schedules has been explored,
an example showing that their distance can be a factor of $\Omega(n^{\alpha-1})$ have been proposed.
A similar example was used in \cite{CLMM14} to show that the standard configuration linear programming formulation has the same integrality gap.
In both cases, $w_{\max}=n$ and $w_{\min}=1$ and
the distance between the optimal preemptive and non-preemptive schedules can be seen as $\Omega((\frac{w_{\max}}{w_{\min}})^{\alpha-1})$.
In this direction, a $(1+\frac{w_{\max}}{w_{min}})^{\alpha-1}$-approximation algorithm for the single-processor case has been presented in \cite{BKLLN13}.
To overcome the above lower bound, all known constant-factor approximation algorithms for the single-processor problem \cite{AH12,BKLLS13,CLMM14}
consider an initial partition of the time horizon into some specific intervals defined by the so-called \emph{landmarks}.
These intervals are defined in such a way that there is not a job whose life interval is completely included in one of them.
Intuitively, this partition is used in order to improve the lower bound by focusing on special preemptive schedules
that can be transformed to a feasible non-preemptive schedules without loosing a lot in terms of approximation.
Here, we are able to avoid the use of this partition improving in this way the result of \cite{BKLLS13} by a factor of $2^{\alpha-1}$.
In order to do that, we modify the \emph{configuration linear program} proposed in \cite{BKLLS13}
by including an additional structural property that is valid for any feasible non-preemptive schedule.
This property helps us to obtain a ``good'' preemptive schedule after a randomized rounding procedure. We transform this ``good'' preemptive schedule to a new instance of the energy-minimization single-processor problem that is  \emph{agreeable} by choosing in an appropriate way new release dates and deadlines for the jobs.
In this way, it is then sufficient to apply the Earliest Deadline First policy in order to get a non-preemptive schedule of the same energy consumption as the preemptive one.

In Section~\ref{section:parallel} we consider the fully heterogeneous non-preemptive speed-scaling problem, and we improve the approximation ratio of
$(\frac{w_{\max}}{w_{\min}})^{\alpha}(\frac{5}{2})^{\alpha-1}\tilde{B}_{\alpha}((1+\epsilon)(1+\frac{w_{\max}}{w_{\min}}))^{\alpha}$ given in \cite{CLMM14}
to $\tilde{B}_{\alpha}((1+\epsilon)(1+\frac{w_{\max}}{w_{\min}}))^{\alpha}$.
Consecutively, our result generalizes and improves the approximation ratio for the equal-works case from $2(1+\epsilon)^{\alpha}5^{\alpha-1}\tilde{B}_{\alpha}$
to $(2(1+\epsilon))^{\alpha}\tilde{B}_{\alpha}$.
Note also that we generalize the machine environment and we pass from the homogeneous with different $w_{i,j}$'s to the fully heterogeneous one.
Our algorithm combines two basic ingredients: the $\tilde{B}_{\alpha}(1+\epsilon)^{\alpha}$-approximation algorithm of \cite{BKLLS13}
for the fully heterogeneous preemptive non-migratory speed-scaling problem
and the $(1+\frac{w_{\max}}{w_{\min}})^{\alpha}$-approximation algorithm of \cite{BKLLN13} for the single-processor non-preemptive speed-scaling problem.
The first algorithm is used in order to assign the jobs to the processors,
while the second one to get a non-preemptive schedule for each processor independently.
The key observation here is that the algorithm for the single-processor non-preemptive case presented in \cite{BKLLN13}
uses as a lower bound the optimal preemptive schedule.

We summarize our results with respect to the existing bibliography in Table~\ref{tbl:results}.

\begin{table}
\begin{center}
\begin{tabular}{l||l|l}
Machine environment & Previous known result & Our results\\
\hline
\hline
\multirow{2}{*}{single-processor} & $2^{\alpha-1}(1+\epsilon)^{\alpha}\tilde{B}_{\alpha}$ \cite{BKLLS13} & \multirow{2}{*}{$(1+\epsilon)^{\alpha}\tilde{B}_{\alpha}$}\\
 & $(12(1+\epsilon))^{\alpha-1}$ \cite{CLMM14} & \\
\hline
homogeneous & $(\frac{5}{2})^{\alpha-1}\tilde{B}_{\alpha}((1+\epsilon)(1+\frac{w_{\max}}{w_{\min}}))^{\alpha}$ \cite{CLMM14} & \\
homogeneous with $w_{i,j}$'s & $(\frac{5}{2})^{\alpha-1}\tilde{B}_{\alpha}((1+\epsilon)(1+\frac{w_{\max}}{w_{\min}})\frac{w_{\max}}{w_{\min}})^{\alpha}$ \cite{CLMM14}
    & \\
fully heterogeneous & & $\tilde{B}_{\alpha}((1+\epsilon)(1+\frac{w_{\max}}{w_{\min}}))^{\alpha}$
\end{tabular}
\end{center}
\label{tbl:results}
\caption{Comparison of the approximation ratios obtained in this paper with the previously best known approximation ratios.}
\end{table}

\section{Single-processor}
\label{section:single}

In this section we consider the single-processor non-preemptive case
and we present an approximation algorithm of ratio $(1+\varepsilon)\tilde{B}_{\alpha}$,
improving upon the previous known results \cite{AH12,BKLLS13,CLMM14} for any $\alpha \leq 77$.
Our algorithm is based on a linear programming formulation combining ideas from \cite{BKLLS13,CLMM14}
and the randomized rounding proposed in \cite{BKLLS13}.

Before formulating the problem as a linear program we need to discretize the time into slots.
Consider the set of all different release dates and deadlines of jobs in increasing order, i.e., $t_1 < t_2 < \ldots < t_k$.
For each $\ell$ , $1 \leq \ell \leq k-1$, we split the time between $t_{\ell}$ and $t_{\ell+1}$
into $n^2(1+\frac{1}{\epsilon})$ equal length slots as proposed in \cite{HO14}.
Let $\mathcal{T}$ be the set of all created slots.
Henceforth, we will consider only solutions in which each slot can be occupied by at most one job which uses the whole slot.
Huang and Ott \cite{HO14} proved that this can be done by loosing a factor of $(1+\epsilon)^{\alpha-1}$.

Our formulation is based on the configuration linear program which was proposed in \cite{BKLLS13}.
In \cite{CLMM14}, an additional constraint was used for the single-processor non-preemptive problem.
This constraint implies that the life interval of a job cannot be included to the execution interval of another job.
We explicitly incorporate this constraint in the definition of the set of configurations for each job.
More specifically, for a job $j \in \mathcal{J}$, we define a configuration $c$ to be a set of consecutive slots in $[r_j,d_j]$
such that there is not another job $j'$ whose life interval $[r_{j'},d_{j'}]$ is completely included in $c$.
Let $\mathcal{C}_j$ be the set of all possible configurations for the job $j$.
We introduce a binary variable $x_{j,c}$ which is equal to one if the job $j$ is executed according the configuration $c \in \mathcal{C}_j$.
Let $|c|$ be the length (in time) of the configuration $c$.
Note that the number of configurations is polynomial as they only contain consecutive slots and the number of slots is also polynomial.

For notational convenience, we write $t \in c$ if the slot $t \in \mathcal{T}$ is part of the configuration $c \in \mathcal{C}_j$ of job $j \in \mathcal{J}$.
By the convexity of the power function, each job in an optimal schedule runs in a constant speed (see for example \cite{YDS95}).
Hence, the quantity $\frac{w_j^{\alpha}}{|c|^{\alpha-1}}$ corresponds to the energy consumed by $j$ if it is executed according to $c$,
as the constant speed that will be used for $j$ is equal to $\frac{w_j}{|c|}$.
Consider the following integer linear program.

\begin{eqnarray*}
\min \sum_{j \in \mathcal{J}} \sum_{c \in \mathcal{C}_j} x_{j,c} \frac{w_j^{\alpha}}{|c|^{\alpha-1}}\\
\sum_{c \in \mathcal{C}_j} x_{j,c} \geq 1 & & \forall j \in \mathcal{J}\\
\sum_{j \in \mathcal{J}} \sum_{c \in \mathcal{C}_j: t \in c} x_{j,c} \leq 1 & & \forall t \in \mathcal{T}\\
x_{j,c} \in \{0,1\} & & \forall j \in \mathcal{J}, c \in \mathcal{C}_j
\end{eqnarray*}

The first constraint ensures that each job is executed according to a configuration.
The second constraint implies that at each slot at most one configuration and hence at most one job can be executed.

We consider the randomized rounding procedure proposed in \cite{BKLLS13} for the fully heterogeneous preemptive non-migratory speed-scaling problem,
adapted to the single processor environment.
More specifically, for each job $j \in \mathcal{J}$ we choose at random with probability $x_{j,c}$ a configuration $c \in \mathcal{C}_{j}$.
By doing this, more than one jobs may be assigned in a slot $t \in \mathcal{T}$ which has as a result to get a non-feasible schedule.
In order to deal with this infeasibility, for each slot $t \in \mathcal{T}$ we perform an appropriate speed-up
that leads to a feasible preemptive schedule.
The above procedure is described formally in Algorithm~\ref{algo:single}.

\begin{algorithm}
\begin{algorithmic}[1]
\STATE Solve the configuration LP relaxation;
\STATE For each job $j \in \mathcal{J}$, choose a configuration at random with probability $x_{j,c}$;
\STATE Let $w_j(t)$ be the amount of work executed for job $j$ during the slot $t\in\mathcal{T}$ according to its chosen configuration;
\STATE Set the processor's speed during $t$ as if $\sum_{j\in\mathcal{J}}w_j(t)$ units of work are executed with constant speed during the entire $t$,
       i.e., $\sum_{j\in\mathcal{J}}w_j(t) / |t|$, where $|t|$ is the length of $t$;
\end{algorithmic}
\caption{}
\label{algo:single}
\end{algorithm}

The analysis of the above procedure in \cite{BKLLS13} is done independently for each slot,
while the speed-up performed leads to a loss of a factor of $\tilde{B}_{\alpha}$ to the approximation ratio.
We can use exactly the same analysis and get the same approximation guarantee for our problem.

In what follows, given the feasible preemptive schedule $\mathcal{S}_{pr}$ obtained by Algorithm~\ref{algo:single},
we will create a feasible non-preemptive schedule $\mathcal{S}_{npr}$ of the same energy consumption.
In fact, we first create a restricted agreeable instance of our initial instance based on $\mathcal{S}_{pr}$.
Then, we will apply the Earliest Deadline First (EDF) policy to get a non-preemptive schedule.

Consider a job $j \in \mathcal{J}$.
Let $p_j$ be the total processing time of $j$ in $\mathcal{S}_{pr}$.
Moreover, assume that the first piece of $j$ in $\mathcal{S}_{pr}$ begins at $b_j$ and the last piece of $j$ in $\mathcal{S}_{pr}$ ends at $e_j$.
Note that the interval $[b_j,e_j] \subseteq [r_j,d_j]$ does not include the life interval of any other job, by the definition of configurations.
For the job $j$, we create an interval $[r'_j,d'_j]$, where $[b_j,e_j] \subseteq [r'_j,d'_j] \subseteq [r_j,d_j]$, as follows.
We select $r'_j$ such that $[r'_j,e_j]$ does not include the whole life interval of another job, $r_j \leq r'_j \leq b_j$ and $r'_j$ is minimum.
We then define $d'_j$ in a similar way, i.e.,
we select $d'_j$ such that $[r'_j,d'_j]$ does not include the whole life interval of another job, $e_j \leq d'_j \leq d_j$ and $d'_j$ is maximum.

An example of the above transformation is given in Fig.~\ref{fig:transformation}.
In this picture, the life intervals of jobs $J_1$ and $J_4$ are shortened.
For example, in the preemptive schedule $\mathcal{S}_{pr}$ the job $J_4$ is executed on the right of the job $J_5$.
Hence, in the restricted instance we cut down the part of the life interval of $J_4$ which is on the left of the release date of $J_5$.
Intuitively, we decide if $J_4$ should be executed on the left or on the right of $J_5$ with respect to $\mathcal{S}_{pr}$
and we transform the initial instance appropriately.

In order to see that the created instance is agreeable,
assume for contradiction that there are two jobs $i,j \in \mathcal{J}$ such that $r'_j < r'_i$ and $d'_i < d'_j$.
The algorithm does not further decrease $r'_i$ because either $[r'_i,d'_i]$ includes the life interval of a job $k \in \mathcal{J}$ or $r'_i=r_i$.
In the first case, we have that $[r_k,d_k] \subseteq [r'_j,d'_j]$,
which is a contradiction to the definition of configurations and the selection of $r'_j$.
In the second case, the algorithm does not further increase $d'_i$
because either $[r'_i,d'_i]$ includes the life interval of a job $\ell \in \mathcal{J}$ or $d'_i=d_i$.
In both last cases we have again a contradiction, as either $[r_{\ell},d_{\ell}] \subseteq [r'_j,d'_j]$ or $[r_i,d_i] \subseteq [r'_j,d'_j]$.

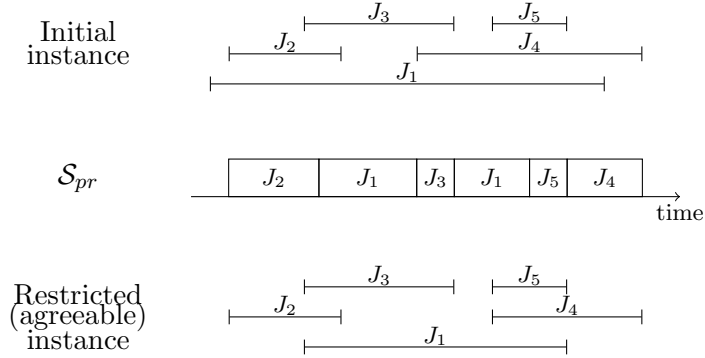
\begin{figure}[htb]
\begin{center}
\begin{tikzpicture}
\node at (-1.5,0.7) {Initial};
\node at (-1.5,0.4) {instance};
\draw[|-|] (0.25,0.0) -- (5.5,0.0);
\node[font=\footnotesize] at (2.875,0.15) {$J_1$};
\draw[|-|] (0.5,0.4) -- (2.0,0.4);
\node[font=\footnotesize] at (1.25,0.55) {$J_2$};
\draw[|-|] (3.0,0.4) -- (6.0,0.4);
\node[font=\footnotesize] at (4.5,0.55) {$J_4$};
\draw[|-|] (1.5,0.8) -- (3.5,0.8);
\node[font=\footnotesize] at (2.5,0.95) {$J_3$};
\draw[|-|] (4.0,0.8) -- (5.0,0.8);
\node[font=\footnotesize] at (4.5,0.95) {$J_5$};
\node at (-1.5,-1.25) {$\mathcal{S}_{pr}$};
\draw[->] (0.0,-1.5) -- (6.5,-1.5);
\node[font=\footnotesize] at (6.5,-1.7) {time};
\draw (0.5,-1.5) rectangle (1.7,-1.0);
\node[font=\footnotesize] at (1.1,-1.25) {$J_2$};
\draw (1.7,-1.5) rectangle (3.0,-1.0);
\node[font=\footnotesize] at (2.35,-1.25) {$J_1$};
\draw (3.0,-1.5) rectangle (3.5,-1.0);
\node[font=\footnotesize] at (3.25,-1.25) {$J_3$};
\draw (3.5,-1.5) rectangle (4.5,-1.0);
\node[font=\footnotesize] at (4.0,-1.25) {$J_1$};
\draw (4.5,-1.5) rectangle (5.0,-1.0);
\node[font=\footnotesize] at (4.75,-1.25) {$J_5$};
\draw (5.0,-1.5) rectangle (6.0,-1.0);
\node[font=\footnotesize] at (5.5,-1.25) {$J_4$};
\node at (-1.5,-2.8) {Restricted};
\node at (-1.5,-3.1) {(agreeable)};
\node at (-1.5,-3.4) {instance};
\draw[|-|] (1.5,-3.5) -- (5.0,-3.5);
\node[font=\footnotesize] at (3.25,-3.35) {$J_1$};
\draw[|-|] (0.5,-3.1) -- (2.0,-3.1);
\node[font=\footnotesize] at (1.25,-2.95) {$J_2$};
\draw[|-|] (4.0,-3.1) -- (6.0,-3.1);
\node[font=\footnotesize] at (5.0,-2.95) {$J_4$};
\draw[|-|] (1.5,-2.7) -- (3.5,-2.7);
\node[font=\footnotesize] at (2.5,-2.55) {$J_3$};
\draw[|-|] (4.0,-2.7) -- (5.0,-2.7);
\node[font=\footnotesize] at (4.5,-2.55) {$J_5$};
\end{tikzpicture}
\caption{The transformation of an instance into a restricted (agreeable) instance based on the feasible preemptive schedule $\mathcal{S}_{pr}$.}
\label{fig:transformation}
\end{center}
\end{figure}

So, we create a restricted agreeable instance using $r'_j$ and $d'_j$ as the release date and the deadline, respectively, of the job $j \in \mathcal{J}$.
Moreover, we set the processing time of $j$ to $p_j$.
The life interval of each job $j \in \mathcal{J}$ in the created instance
is a superset of its execution interval to $\mathcal{S}$, i.e., $[b_j,e_j] \subseteq [r'_j,d'_j]$.
Hence, the schedule $\mathcal{S}$ is a feasible preemptive schedule for the restricted instance.
Moreover, the life interval of each job $j \in \mathcal{J}$ in the restricted instance
is a subset of its life interval to the initial instance, i.e., $[r'_j,d'_j] \subseteq [r_j,d_j]$.
Thus, we can get a feasible non-preemptive schedule $\mathcal{S}_{npr}$ for our initial instance, by applying the EDF policy to the new instance.
Finally, as in both schedules $\mathcal{S}_{pr}$ and $\mathcal{S}_{npr}$ we use the some processing times, their energy consumption is the same.

\begin{theorem}
There is a $(1+\varepsilon)^{\alpha-1}\tilde{B}_{\alpha}$-approximation algorithm for the single-processor non-preemptive speed-scaling problem.
\end{theorem}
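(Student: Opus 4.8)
The plan is to chain the four ingredients assembled above. Write $OPT$ for the energy of an optimal non-preemptive schedule of the original instance. First I would pass to slot-aligned schedules: by the discretization of Huang and Ott \cite{HO14}, there is a non-preemptive schedule in which every slot of $\mathcal{T}$ is either empty or entirely filled by a single job, whose energy is at most $(1+\epsilon)^{\alpha-1}OPT$. Next I would verify that the displayed integer program is an exact formulation of this slot-aligned problem: in such a schedule each job $j$ occupies a block of consecutive slots inside $[r_j,d_j]$, and this block cannot contain the whole life interval $[r_{j'},d_{j'}]$ of another job $j'$ (that job would then have no free slot in its window), so the block is a legal configuration in $\mathcal{C}_j$; conversely any $0/1$ assignment feasible for the IP yields such a schedule. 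Hence the optimal value of the IP, and a fortiori of its LP relaxation, is at most $(1+\epsilon)^{\alpha-1}OPT$.

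Then I would solve the LP and run Algorithm~\ref{algo:single}. The rounding and the slot-wise speed-up produce a feasible \emph{preemptive} schedule $\mathcal{S}_{pr}$, and by the analysis of \cite{BKLLS13} — which is carried out slot by slot and uses only that the fractional load $\sum_{j}\sum_{c \in \mathcal{C}_j: t\in c} x_{j,c}$ of each slot $t$ is at most $1$ by the second LP constraint — the expected energy of $\mathcal{S}_{pr}$ is at most $\tilde{B}_{\alpha}$ times the LP optimum. Therefore $\mathbb{E}[E(\mathcal{S}_{pr})] \le (1+\epsilon)^{\alpha-1}\tilde{B}_{\alpha}\,OPT$.

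It remains to turn $\mathcal{S}_{pr}$ into a non-preemptive schedule without increasing the energy. Here I would carry out the surgery already described: from $\mathcal{S}_{pr}$ read off, for each job $j$, its total processing time $p_j$, the start $b_j$ of its first piece and the end $e_j$ of its last piece, and define the shrunk window $[r'_j,d'_j]$ with $[b_j,e_j]\subseteq[r'_j,d'_j]\subseteq[r_j,d_j]$ by minimizing $r'_j$ and maximizing $d'_j$ subject to $[r'_j,d'_j]$ not containing any other job's life interval. The argument given above shows that the resulting instance, with processing time $p_j$ for job $j$, is agreeable, that $\mathcal{S}_{pr}$ is feasible for it, and that any schedule feasible for it is feasible for the original instance. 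Applying EDF to this agreeable instance, running each job at constant speed $w_j/p_j$, yields a non-preemptive schedule $\mathcal{S}_{npr}$ (EDF produces a non-preemptive schedule on agreeable instances, as in \cite{YDS95}) whose per-job energy $w_j^{\alpha}/p_j^{\alpha-1}$ coincides with that in $\mathcal{S}_{pr}$; hence $E(\mathcal{S}_{npr}) = E(\mathcal{S}_{pr})$ and $\mathbb{E}[E(\mathcal{S}_{npr})]\le (1+\epsilon)^{\alpha-1}\tilde{B}_{\alpha}\,OPT$, which is the claimed bound (the expectation is removed by a standard derandomization via conditional expectations).

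The delicate points are exactly the two hand-offs: making sure the extra configuration constraint is precisely what the surgery needs — it is what guarantees that $[b_j,e_j]$ contains no foreign life interval, which in turn is what lets us re-expand each window to obtain an \emph{agreeable} instance — and checking that the $\tilde{B}_{\alpha}$ speed-up bound of \cite{BKLLS13}, stated there for the fully heterogeneous preemptive non-migratory model, transfers unchanged to the single-processor slot model; both become routine once the definitions are aligned, so I expect no real obstacle beyond bookkeeping.
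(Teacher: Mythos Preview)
Your proposal is correct and follows essentially the same approach as the paper: discretize via \cite{HO14}, set up the configuration LP with the extra ``no contained life interval'' constraint, apply the randomized rounding of \cite{BKLLS13} to obtain $\mathcal{S}_{pr}$, then perform the window-shrinking surgery to produce an agreeable instance on which EDF yields a non-preemptive schedule of equal energy. Your write-up adds a couple of details the paper leaves implicit (the verification that the IP exactly models slot-aligned non-preemptive schedules, and the derandomization), but the argument is the same.
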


\section{Parallel Processors}
\label{section:parallel}

In this section we consider the fully heterogeneous multiprocessor case and we propose an approximation algorithm
of ratio $\tilde{B}_{\alpha}((1+\epsilon)(1+\frac{w_{\max}}{w_{\min}}))^{\alpha}$,
generalizing the recent result by Cohen-Addad et al. \cite{CLMM14} from the homogeneous with different $w_{i,j}$'s to the fully heterogeneous environment
and improving their ratio by a factor of $(\frac{w_{\max}}{w_{\min}})^{\alpha}(\frac{5}{2})^{\alpha-1}$.
Our algorithm uses the following result proposed in \cite{BKLLN13}.

\begin{theorem}\label{thm:cocoon}
\emph{\cite{BKLLN13}} There is an approximation algorithm for the single-processor non-preemptive speed-scaling problem
that returns a schedule $\mathcal{S}$ with energy consumption
$$E(\mathcal{S}) \leq (1+\frac{w_{\max}}{w_{\min}})^{\alpha} E(\mathcal{S}^*_{pr}) \leq (1+\frac{w_{\max}}{w_{\min}})^{\alpha} E(\mathcal{S}^*_{npr})$$
where $\mathcal{S}^*_{pr}$ and $\mathcal{S}^*_{npr}$ are the optimal schedules for the preemptive and the non-preemptive case, respectively.
\end{theorem}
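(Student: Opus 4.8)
The plan is to treat the two inequalities separately. The inequality $E(\mathcal{S}^{*}_{pr}) \le E(\mathcal{S}^{*}_{npr})$ requires no argument: every non-preemptive schedule is in particular a preemptive one, so the optimum over the larger class of preemptive schedules cannot exceed the optimum over the non-preemptive ones. The real content is the algorithm together with the bound $E(\mathcal{S}) \le (1+\frac{w_{\max}}{w_{\min}})^{\alpha}\, E(\mathcal{S}^{*}_{pr})$, so I would compute $\mathcal{S}^{*}_{pr}$ in polynomial time using the algorithm of \cite{YDS95} and then convert it, combinatorially, into a non-preemptive schedule $\mathcal{S}$ of comparable energy.

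The conversion would rest on two structural facts about $\mathcal{S}^{*}_{pr}$. First, by convexity of the power function each job $j$ runs at a single constant speed $v_{j}=w_{j}/p_{j}$, where $p_{j}$ is its total processing time, so its contribution to $E(\mathcal{S}^{*}_{pr})$ is $w_{j}^{\alpha}/p_{j}^{\alpha-1}$, and more generally a constant-speed job of work $w_{j}$ run for a time $t$ costs $w_{j}^{\alpha}/t^{\alpha-1}$ energy. Second, $\mathcal{S}^{*}_{pr}$ may be taken to obey the Earliest Deadline First rule, so on the span $[b_{j},e_{j}] \subseteq [r_{j},d_{j}]$ of job $j$ (from the start of its first piece to the end of its last piece) the processor is never idle and executes only $j$ and jobs whose deadline is earlier than $d_{j}$. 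A short case analysis on the EDF priorities then shows that the spans $\{[b_{j},e_{j}]\}$ form a laminar family (no two of them cross) and that every job executed strictly inside $[b_{j},e_{j}]$ is entirely contained in it, which yields the identity $e_{j}-b_{j}=p_{j}+\sum_{\ell\in D(j)} p_{\ell}$, where $D(j)$ is the set of jobs whose span is nested strictly inside $[b_{j},e_{j}]$.

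The core step is the transformation itself. I would process the jobs along the laminar forest and, for each $j$, merge its pieces into a single contiguous block that still lies within $[r_{j},d_{j}]$, displacing the foreign pieces that separated consecutive pieces of $j$ and re-packing them (recursively, inside their own, slightly shifted, windows); the re-packed jobs are then run faster, but by a controlled amount. The accounting I would aim for is per job: job $j$ ends up running over a contiguous interval whose length is at least a $\tfrac{1}{1+w_{\max}/w_{\min}}$ fraction of the room it had to spread in, so that, by the $w_{j}^{\alpha}/t^{\alpha-1}$ cost, the energy of $j$ grows by a factor at most $(1+\frac{w_{\max}}{w_{\min}})^{\alpha}$ relative to its contribution in $\mathcal{S}^{*}_{pr}$. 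Summing over all jobs gives the claimed bound, and feasibility (start after $r_{j}$, completion before $d_{j}$, no overlaps) holds because every block remains inside the corresponding original life interval while laminarity keeps the rearrangements of distinct subtrees from clashing.

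The step I expect to be the main obstacle is exactly this transformation: one must show that the speed-ups forced by the consolidation do not cascade uncontrollably as one descends the laminar forest, and that after all the shifting each block still fits within $[r_{j},d_{j}]$ without overlapping another block. The laminar structure of the spans and the identity $e_{j}-b_{j}=p_{j}+\sum_{\ell\in D(j)} p_{\ell}$ are precisely what make this controllable, and the ratio $\frac{w_{\max}}{w_{\min}}$ enters as the bound on how much foreign work can be forced into the slack around a single job. The remaining ingredients, namely the polynomial running time, the trivial second inequality, and the final summation, are routine.
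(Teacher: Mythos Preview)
The paper does not prove this theorem: it is quoted from \cite{BKLLN13} and invoked as a black box in the subsequent multiprocessor argument. There is therefore no in-paper proof to compare your sketch against.

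On its own merits, your outline is in the right spirit---compute the YDS optimum, exploit the laminar structure of the EDF spans $[b_j,e_j]$, and consolidate each job into a single block---and the second inequality is indeed immediate. But the heart of the argument, the transformation ``merge the pieces of $j$, displace the foreign pieces, re-pack recursively,'' is stated only as an intention, and you yourself flag it as the obstacle. You do not specify where the contiguous block for $j$ is placed, why the recursively shifted descendants still land inside their own life intervals after the shifts, or why the cumulative shrink factor stays at $1+\tfrac{w_{\max}}{w_{\min}}$ rather than compounding with the depth of the laminar forest. As a minor symptom that the accounting has not actually been carried out: shrinking the running time of $j$ by a factor $1+\tfrac{w_{\max}}{w_{\min}}$ under the cost $w_j^{\alpha}/t^{\alpha-1}$ gives an energy blow-up of $(1+\tfrac{w_{\max}}{w_{\min}})^{\alpha-1}$, not $(1+\tfrac{w_{\max}}{w_{\min}})^{\alpha}$, so either your shrink bound or your energy bound is loose. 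What you have is a credible plan, not yet a proof.
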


The key observation in the above theorem concerns the intermediate result that
the approximation ratio of the algorithm for the single-processor non-preemptive case proposed in \cite{BKLLN13}
uses as lower bound the optimal preemptive schedule.
Based on this, we propose Algorithm~\ref{algo:heterogeneous} which uses the
$(1+\epsilon)^{\alpha}\tilde{B}_{\alpha}$-approximation algorithm proposed in \cite{BKLLS13}
for the fully heterogeneous preemptive non-migratory speed-scaling problem to find a good assignment of the jobs to the processors
and then applies Theorem~\ref{thm:cocoon} to create a non-preemptive schedule independently for each processor.

\begin{algorithm}
\begin{algorithmic}[1]
\STATE Find a preemptive non-migratory schedule $\mathcal{S}$ using the algorithm proposed in \cite{BKLLS13} for the fully heterogeneous environment;
\FOR {each processor $i \in \mathcal{P}$}
\STATE Let $\mathcal{J}_i$ be the set of jobs assigned to processor $i$ according to $\mathcal{S}$;
\STATE Find a single-processor non-preemptive schedule $\mathcal{S}_{i,npr}$
       using the algorithm proposed in \cite{BKLLN13} (Theorem~\ref{thm:cocoon}) with input $\mathcal{J}_i$;
\ENDFOR
\RETURN the non-preemptive schedule $\mathcal{S}_{npr}$ which consists of the non-preemptive schedules $\mathcal{S}_{i,npr}$, $1 \leq i \leq m$;
\end{algorithmic}
\caption{}
\label{algo:heterogeneous}
\end{algorithm}

\begin{theorem}\label{thm:heterogeneous}
Algorithm~\ref{algo:heterogeneous} achieves an approximation ratio of $\tilde{B}_{\alpha}((1+\epsilon)(1+\frac{w_{\max}}{w_{\min}}))^{\alpha}$
for the fully heterogeneous non-preemptive speed-scaling problem.
\end{theorem}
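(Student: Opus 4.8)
The plan is to chain the two subroutines that Algorithm~\ref{algo:heterogeneous} invokes, exploiting the point stressed before the theorem: the single-processor routine of Theorem~\ref{thm:cocoon} bounds its energy against the \emph{optimal preemptive} single-processor schedule, not merely against the optimal non-preemptive one. First I would fix an optimal non-preemptive (hence non-migratory) schedule $\mathcal{S}^*_{npr}$ for the whole instance. Since any non-preemptive non-migratory schedule is in particular a feasible preemptive non-migratory schedule, $\mathcal{S}^*_{npr}$ certifies that the optimum of the fully heterogeneous preemptive non-migratory problem is at most $E(\mathcal{S}^*_{npr})$; therefore the schedule $\mathcal{S}$ returned in Step~1 by the algorithm of \cite{BKLLS13} satisfies $E(\mathcal{S}) \le (1+\epsilon)^{\alpha}\tilde{B}_{\alpha}\,E(\mathcal{S}^*_{npr})$.

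Next I would analyse $\mathcal{S}$ one processor at a time. As $\mathcal{S}$ is non-migratory, the sets $\mathcal{J}_i$ partition $\mathcal{J}$, and the restriction $\mathcal{S}_i$ of $\mathcal{S}$ to processor $i$ is a feasible \emph{preemptive} single-processor schedule for $\mathcal{J}_i$ with the on-processor data $w_{i,j},r_{i,j},d_{i,j}$ and exponent $\alpha_i$, with $\sum_{i\in\mathcal{P}} E(\mathcal{S}_i) = E(\mathcal{S})$. Invoking Theorem~\ref{thm:cocoon} on processor $i$, the schedule $\mathcal{S}_{i,npr}$ computed in Step~4 has $E(\mathcal{S}_{i,npr}) \le \left(1+\tfrac{w_{i,\max}}{w_{i,\min}}\right)^{\alpha_i} E(\mathcal{S}^*_{i,pr})$, where $\mathcal{S}^*_{i,pr}$ is an optimal preemptive single-processor schedule for $\mathcal{J}_i$ and $w_{i,\max},w_{i,\min}$ are the largest and smallest works among the jobs of $\mathcal{J}_i$ on processor $i$. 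Feasibility of $\mathcal{S}_i$ gives $E(\mathcal{S}^*_{i,pr}) \le E(\mathcal{S}_i)$, and the monotone estimates $\tfrac{w_{i,\max}}{w_{i,\min}} \le \tfrac{w_{\max}}{w_{\min}}$ and $\alpha_i \le \alpha$ (using $1+\tfrac{w_{\max}}{w_{\min}} \ge 1$) let me replace the per-processor quantities by the global ones, so $E(\mathcal{S}_{i,npr}) \le \left(1+\tfrac{w_{\max}}{w_{\min}}\right)^{\alpha} E(\mathcal{S}_i)$.

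Finally I would sum over the processors. The returned $\mathcal{S}_{npr}$ is feasible, being a union of feasible single-processor non-preemptive schedules over a partition of $\mathcal{J}$, and
\[
E(\mathcal{S}_{npr}) = \sum_{i\in\mathcal{P}} E(\mathcal{S}_{i,npr}) \le \left(1+\tfrac{w_{\max}}{w_{\min}}\right)^{\alpha}\sum_{i\in\mathcal{P}} E(\mathcal{S}_i) = \left(1+\tfrac{w_{\max}}{w_{\min}}\right)^{\alpha} E(\mathcal{S}) \le \tilde{B}_{\alpha}\left((1+\epsilon)\left(1+\tfrac{w_{\max}}{w_{\min}}\right)\right)^{\alpha} E(\mathcal{S}^*_{npr}),
\]
which is the asserted ratio. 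The step deserving the most attention is exactly the one isolated above --- that Theorem~\ref{thm:cocoon}, as proved in \cite{BKLLN13}, genuinely compares against the optimal \emph{preemptive} single-processor cost; without this, the decoupling ``assign preemptively with \cite{BKLLS13}, then schedule each processor non-preemptively'' would break, since the restriction bound $E(\mathcal{S}^*_{i,pr}) \le E(\mathcal{S}_i)$ would no longer be legitimate. Everything else is routine bookkeeping, up to the mild care of folding the per-processor exponents $\alpha_i$ and work ratios $w_{i,\max}/w_{i,\min}$ into the global $\alpha$ and $w_{\max}/w_{\min}$.
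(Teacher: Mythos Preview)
Your proposal is correct and follows essentially the same argument as the paper: bound $E(\mathcal{S})$ against the non-preemptive optimum via the preemptive non-migratory relaxation, restrict to each processor, apply Theorem~\ref{thm:cocoon} against the optimal preemptive single-processor cost, use $E(\mathcal{S}^*_{i,pr})\le E(\mathcal{S}_i)$, and sum. You are slightly more explicit than the paper in handling the per-processor exponents $\alpha_i$ and work ratios $w_{i,\max}/w_{i,\min}$ before folding them into the global $\alpha$ and $w_{\max}/w_{\min}$, which the paper tacitly absorbs into the statement of Theorem~\ref{thm:cocoon}.
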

\begin{proof}
Consider first the schedule $\mathcal{S}$ obtained in Line~1 of the algorithm,
and let $\mathcal{S}_{i,pr}$ be the (sub)schedule of $\mathcal{S}$ that corresponds to the processor $i \in \mathcal{P}$.
In other words, each $\mathcal{S}_{i,pr}$ is a feasible preemptive schedule of the subset of jobs $\mathcal{J}_i$.
As $\mathcal{S}$ is a non-migratory schedule the subsets of jobs $\mathcal{J}_1,\mathcal{J}_2,\ldots,\mathcal{J}_m$ are pairwise disjoint.
Hence, we have that
\begin{equation}\label{eq:h1}
\sum_{i \in \mathcal{P}} E(\mathcal{S}_{i,pr}) = E(\mathcal{S}) \leq (1+\epsilon)^{\alpha}\tilde{B}_{\alpha} E(\mathcal{S}^*)
\end{equation}
where $\mathcal{S}^*$ is the optimal non-preemptive schedule for our problem and
the inequality holds by the result in \cite{BKLLS13} and the fact that the energy consumption in an optimal preemptive-non-migratory schedule
is a lower bound to the energy consumption of $\mathcal{S}^*$.

Consider now, for each processor $i \in \mathcal{P}$, the schedule $\mathcal{S}_{i,npr}$ created in Line~4 of the algorithm.
By Theorem~\ref{thm:cocoon} we have that
\begin{equation*}
E(\mathcal{S}_{i,npr}) \leq \left(1+\frac{w_{\max}}{w_{\min}}\right)^{\alpha} E(\mathcal{S}^*_{i,pr})
\end{equation*}
where $\mathcal{S}^*_{i,pr}$ is an optimal preemptive schedule for the subset of jobs $\mathcal{J}_i$.
As $\mathcal{S}^*_{i,pr}$ and $\mathcal{S}_{i,pr}$ are schedules concerning the same set of jobs
and $\mathcal{S}^*_{i,pr}$ is the optimal preemptive schedule, we have that
\begin{equation}\label{eq:h2}
E(\mathcal{S}_{i,npr}) \leq \left(1+\frac{w_{\max}}{w_{\min}}\right)^{\alpha} E(\mathcal{S}_{i,pr})
\end{equation}

Since $\mathcal{S}_{npr}$ is the concatenation of $\mathcal{S}_{i,npr}$ for all $i \in \mathcal{P}$,
and by using Equations~(\ref{eq:h1}) and~(\ref{eq:h2}), the theorem follows.
\end{proof}

Algorithm~\ref{algo:heterogeneous} can be also used for the case where all jobs have equal work on each processor,
i.e., each job $j \in \mathcal{J}$ has to execute an amount of work $w_{i,j}=w_i$ if it is assigned on processor $i \in \mathcal{P}$.
In this case we get the following result.

\begin{corollary}
Algorithm~\ref{algo:heterogeneous} achieves a constant-approximation ratio of $\tilde{B}_{\alpha}(2(1+\epsilon))^{\alpha}$
for the fully heterogeneous non-preemptive speed-scaling problem when all jobs have equal work on each processor.
\end{corollary}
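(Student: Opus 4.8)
The plan is to re-run the argument of Theorem~\ref{thm:heterogeneous} almost verbatim, but to track more carefully where the factor $(1+\frac{w_{\max}}{w_{\min}})^{\alpha}$ comes from. The crucial point is that in Theorem~\ref{thm:cocoon} the quantity $\frac{w_{\max}}{w_{\min}}$ is the ratio of the largest to the smallest work \emph{among the jobs handed to the single-processor algorithm}, not a global quantity of the whole multiprocessor instance. When all jobs have equal work on each processor, the set $\mathcal{J}_i$ of jobs that the schedule $\mathcal{S}$ of Line~1 assigns to processor $i$ consists entirely of jobs whose work on $i$ equals $w_i$; hence, for the instance fed to the single-processor algorithm in Line~4, we have $w_{\max}=w_{\min}=w_i$ and the ratio is exactly $1$.

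Concretely, I would first reproduce inequality~(\ref{eq:h1}): since the schedule $\mathcal{S}$ produced by the algorithm of \cite{BKLLS13} is non-migratory, the job subsets $\mathcal{J}_1,\dots,\mathcal{J}_m$ are pairwise disjoint and $\sum_{i\in\mathcal{P}} E(\mathcal{S}_{i,pr}) = E(\mathcal{S}) \le (1+\epsilon)^{\alpha}\tilde{B}_{\alpha}E(\mathcal{S}^*)$, where $\mathcal{S}^*$ is the optimal non-preemptive schedule and the inequality uses that the optimal preemptive non-migratory energy lower-bounds $E(\mathcal{S}^*)$. Next, for each processor $i$, applying Theorem~\ref{thm:cocoon} to the equal-work instance $\mathcal{J}_i$ gives $E(\mathcal{S}_{i,npr}) \le (1+1)^{\alpha}E(\mathcal{S}^*_{i,pr}) = 2^{\alpha}E(\mathcal{S}^*_{i,pr})$, and since $\mathcal{S}^*_{i,pr}$ is an optimal preemptive schedule for $\mathcal{J}_i$ while $\mathcal{S}_{i,pr}$ is merely a feasible one for the same jobs, $E(\mathcal{S}^*_{i,pr}) \le E(\mathcal{S}_{i,pr})$, hence $E(\mathcal{S}_{i,npr}) \le 2^{\alpha}E(\mathcal{S}_{i,pr})$. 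Summing over $i\in\mathcal{P}$ and substituting~(\ref{eq:h1}) yields
\[
E(\mathcal{S}_{npr}) \;=\; \sum_{i\in\mathcal{P}} E(\mathcal{S}_{i,npr}) \;\le\; 2^{\alpha}\sum_{i\in\mathcal{P}} E(\mathcal{S}_{i,pr}) \;\le\; 2^{\alpha}(1+\epsilon)^{\alpha}\tilde{B}_{\alpha}E(\mathcal{S}^*) \;=\; \bigl(2(1+\epsilon)\bigr)^{\alpha}\tilde{B}_{\alpha}E(\mathcal{S}^*),
\]
which is the claimed bound.

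There is essentially no serious obstacle here; the only thing that needs a line of justification is that the equal-work-per-processor structure is exactly the kind of structure Theorem~\ref{thm:cocoon} tolerates, i.e.\ that restricting attention to $\mathcal{J}_i$ really does produce an input with $w_{\max}/w_{\min}=1$ for the single-processor algorithm. Everything else — feasibility of $\mathcal{S}_{i,npr}$ on processor $i$, disjointness of the $\mathcal{J}_i$, and the telescoping of energies — is inherited directly from the proof of Theorem~\ref{thm:heterogeneous}. So the corollary follows by specializing that proof with $\frac{w_{\max}}{w_{\min}}$ replaced by $1$ in the per-processor step while keeping it untouched in the approximation of the assignment, which is already independent of the work ratio.
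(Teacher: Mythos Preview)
Your proposal is correct and is exactly the argument the paper has in mind: the corollary is stated without a proof in the paper, but it is clearly meant as the immediate specialization of Theorem~\ref{thm:heterogeneous} where, on each processor $i$, the single-processor instance $\mathcal{J}_i$ has $w_{\max}/w_{\min}=1$, so Theorem~\ref{thm:cocoon} contributes only a $2^{\alpha}$ factor. Your careful remark that the ratio in Theorem~\ref{thm:cocoon} is local to the single-processor input (not the global instance) is precisely the point that makes the specialization go through.
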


\section{Conclusions}
In this paper, we have presented algorithms with improved approximation ratios for both the single-processor and the multiprocessor environments.
A challenging question left open in this work is the existence of a constant approximation ratio algorithm for the multiprocessor case.
Also, there is a need for non-approximability results in the same vein as the one presented in \cite{CLMM14}.


\end{document}